
\documentclass[letterpaper, 10 pt, conference]{ieeeconf}  

\pdfoutput=1
\IEEEoverridecommandlockouts                              


\pdfminorversion=4


\usepackage{cite}
\usepackage{amsmath,amssymb,amsfonts}
\usepackage{algpseudocode}
\usepackage{bm}
\usepackage{graphicx}
\usepackage{textcomp}
\usepackage{xcolor}
\usepackage{accents}

\usepackage{enumitem}
\usepackage[caption=false,font=footnotesize]{subfig}
\newcommand{\ubar}[1]{\underaccent{\bar}{#1}}

\newcommand{\cvar}[0]{\textup{\textrm{CVaR}}}
\newcommand{\conv}[0]{\textup{\textrm{conv}}}
\newcommand{\floor}[1]{\lfloor #1 \rfloor}

\newtheorem{theorem}{Theorem}

\newtheorem{lemma}{Lemma}
\newtheorem{assumption}{Assumption}
\newtheorem{definition}{Definition}

\def\BibTeX{{\rm B\kern-.05em{\sc i\kern-.025em b}\kern-.08em
    T\kern-.1667em\lower.7ex\hbox{E}\kern-.125emX}}

\title{\LARGE \bf
Uncertainty-aware Flexibility Envelope Prediction in Buildings with Controller-agnostic Battery Models
}

\author{Paul Scharnhorst$^{1, 2, *}$, Baptiste Schubnel$^1$, Rafael E. Carrillo$^1$, Pierre-Jean Alet$^1$ and Colin N. Jones$^{2}$
\thanks{This work received support from CSEM's Data Program and the Swiss National Science Foundation under the RISK project (Risk Aware Data Driven
Demand Response, grant number 200021 175627).}
\thanks{$^{1}$Paul Scharnhorst, Baptiste Schubnel, Rafael E. Carrillo, and Pierre-Jean Alet are with  CSEM S.A., 2002 Neuchâtel, Switzerland, Email: \{{\tt paul.scharnhorst, baptiste.schubnel, rafael.carrillo, pierre-jean.alet}\}@csem.ch}%
\thanks{$^{2}$Paul Scharnhorst and Colin N. Jones are with LA, EPFL, 1015 Lausanne, Switzerland,  Email: {\tt colin.jones@epfl.ch}}%
\thanks{$^*$ Corresponding author}%
}

\begin{document}

\maketitle
\thispagestyle{empty}
\pagestyle{empty}

\begin{abstract}
Buildings are a promising source of flexibility for the application of demand response. In this work, we introduce a novel battery model formulation to capture the state evolution of a single building. Being fully data-driven, the battery model identification requires one dataset from a period of nominal controller operation, and one from a period with flexibility requests, without making any assumptions on the underlying controller structure. We consider parameter uncertainty in the model formulation and show how to use risk measures to encode risk preferences of the user in robust uncertainty sets. Finally, we demonstrate the uncertainty-aware prediction of flexibility envelopes for a building simulation model from the Python library Energym.
\end{abstract}

\section{Introduction}
Electrification, along with the increase of renewable energy production, serves as a means to mitigate climate change \cite{WEO_IEA}. However, increasing energy demand, together with the intermittent behavior of renewable energy sources like solar or wind power, creates new challenges for grid operators in maintaining a balanced grid. A key tool to achieve this balancing, is Demand Response (DR) \cite{zhou2021electrification}, a way for prosumers to adapt their consumption to available energy generation. 

Various studies have been undertaken on the topic of DR, from reviews on DR definitions and metrics \cite{REYNDERS2018372}, to socio-economic surveys on the acceptance of DR schemes \cite{Acceptance_DR}, to research on the probability and extent of consumer reaction to price signals \cite{GANESAN2022111663}. In this work, we will focus on direct DR, a setting where participating prosumers are remunerated for following explicit consumption signals from the grid operator, while not providing flexibility in a prespecified range leads to penalties. This is different from indirect DR, where participants are incentivized to adapt their consumption behavior through varying price signals.

Buildings have been identified as promising assets to provide flexibility (e.g. \cite{vrettos_robust_2014}). Equipped with e.g. heat pumps, photovoltaic systems, and electric vehicles, buildings have the potential to significantly change their consumption patterns in the short term. An exact estimation of the flexibility is necessary to efficiently use the buildings' capabilities for DR.

We will focus on the single building flexibility estimation problem in this work. As a tool for this estimation, we will use the concept of flexibility envelopes. Flexibility envelopes, as in \cite{DHULST201579}, provide availability time predictions of power level changes from a given baseline, depending on the time of the day. Reference \cite{gasser_predictive_2021} uses flexibility envelopes without considering a baseline, by reporting availability times for discretized power levels. While being able to adapt the control objectives, the approach relies on modeling the building and its systems. A data-driven way of learning and approximating flexibility envelopes is presented in \cite{hekmat2021datadriven}. This approach is capable of reducing the computational burden, but a training set of precomputed flexibility envelopes is needed.

Due to the potentially difficult and complex modeling of buildings and their equipment, virtual battery models have been determined as an efficient tool to model the thermal capacity of a building. Reference \cite{muller_autonomous_2017} provides a set description of the feasible power consumption profiles, with the use of a data-driven battery model, for heating systems with on/off behavior. Input tracking for constrained linear discrete-time systems is considered in \cite{gorecki_guaranteeing_2015}, with the aim of certifying input trackability for a given reference set. In a DR application, a fixed-shape reference set is parameterized by a battery model and optimized over to offer cost-optimal flexibility to the grid operator. Further usage of battery models for aggregations of thermostatically controlled loads is shown in e.g. \cite{zhao_geometric_2016}.

Our contributions are threefold. Firstly, we introduce a data-driven battery model for representing the state of a building, without assuming a fixed controller structure and present a method for its parameter identification. Secondly, parameter uncertainty is considered for the resulting set of feasible trajectories. Risk measures are used to formulate robust uncertainty sets that take risk preferences of the user into account. Thirdly, we use the battery model to compute flexibility envelopes for a simulated building from the Energym library \cite{scharnhorst_energym} and compare the results for different risk levels.

\textbf{Notation:} We denote the indicator function of the set $\{0\}$ by $\chi$, so $\chi_q = \begin{cases}1, \text{ if } q = 0 \\ 0, \text{ if } q \neq 0 \end{cases}$. We use bold symbols for vectors and trajectories, with trajectories written as $\bm{r}_{0:k-1} = [r_0, \dots, r_{k-1}]^\top$, or simply $\bm{r}$ if the context is clear. The $N$-dimensional probability simplex is given by $\Delta^N=\{\bm q\in\mathbb{R}^N:  q_i\geq0, i=1,\dots,N, \sum_{i=1}^N q_i=1\}$. $\mathbf{0}$ and $\mathbf{1}$ denote a vector of appropriate size of zeros or ones, respectively. We assume empty sums to be $0$ and $\mathbb{R}^0=\{0\}$.

\section{Learning Battery Models}
\label{sec:lbm}
\subsection{Model Formulation}
\label{subsec:model}
We assume that the state of a building at time $t$, in terms of its thermal capacity, can be described by a scalar $s_t$, bounded by $s_{\min}$ and $s_{\max}$ (w.l.o.g. $s_{\min}=0$ and $s_{\max}=1$). The state is a measure of the energy stored in the system, and its bounds depend on the thermal bounds which are related to comfort or operational constraints in the building. This means that $s_t=0$ indicates that no energy can be extracted from the building without violating constraints, and $s_t=1$ indicates that no energy can be inserted. A possible definition of this abstract state is given in Section~\ref{sec:statedef}.


We now proceed to the derivation of a battery like state equation for the state $s_t$. In full generality, the state evolution can be represented by the following difference equation:
\begin{equation}
    s_{t+1} - s_t = h\left(s_t, \bm e_t, p_t\right) + \omega_t,
    \label{eq:sde}
\end{equation}
with the external weather conditions denoted by $\bm e_t$, which can comprise multiple measurements and past data, the power injected $p_t$, and a noise term $\omega_t$. The noise term accounts for random disturbances in the system, e.g. occupants.

We assume that the controller operating the system leads to the state following a specific pattern, here called ``nominal state'' and denoted by $s_{\text{n}, t}$. The evolution of this nominal state can similarly be expressed as 
\begin{equation}
    \label{eq:sdenom}
    s_{\text{n}, t+1} - s_{\text{n}, t} = h(s_{\text{n}, t}, \bm e_t, p_{\text{n}, t}) + \omega_t,
\end{equation}
with the baseline power injected given by $p_{\text{n}, t}$.

Prediction of this baseline power is not the focus of this work. An overview of data-driven methods to predict energy consumption in buildings can be found in \cite{datadrivenenergy}. To introduce uncertainty quantification for consumption prediction, methods like Gaussian Process (GP) regression \cite{GPforML}, kernel methods with error quantification \cite{MADDALENA2021109896}, or variational autoencoders \cite{ljung_vae} can be used. We will therefore assume to have reasonably accurate baseline predictions where necessary.

Considering the difference of \eqref{eq:sde} and \eqref{eq:sdenom}, we get
\begin{equation}
\label{eq:stateevol}
    s_{t+1} -s_t =  s_{\text{n}, t+1} - s_{\text{n}, t} + h(s_t, \bm e_t, p_t) - h(s_{\text{n}, t}, \bm e_t, p_{\text{n}, t}).
\end{equation}

The goal of our work is to quantify the system behavior, and therefore the evolution of $s_t$, in cases where the nominal controller actions are augmented by specific requests.
\begin{definition}[Relative Consumption Request]
Given a baseline power $p_{\text{n}, t}\in\mathbb{R}$, we define a relative request as $r_t\in\mathbb{R}$ such that the desired overall consumption of the building at time $t$ is $ p_t=p_{\text{n}, t}+r_t$.
\end{definition}

We consider systems with controllers that drive the state back to its nominal value after receiving requests (as typically observed in thermal assets), therefore, we get two distinct phases in the system operation:
\begin{enumerate}
    \item The request phase where $p_t = p_{\text{n}, t}+r_t$.
    \item The recovery phase where $s_t$ is driven towards $s_{\text{n}, t}$, with the injected power denoted by $p_{\text{con}, t}$.
\end{enumerate}
Moreover, in the request phase, we can distinguish between receiving positive or negative relative consumption requests, due to equipment or controller characteristics.

We make the following assumption about the controller.
\begin{assumption}
\label{as:statebound}
For each $s_t\in [0,1]$, the controller is able to satisfy the comfort/operational constraints for all $t'>t$. When receiving flexibility requests, the controller follows them as closely as possible, without violating constraints. Furthermore, we assume to either receive state measurements from the controller, or measurements from which we can construct a state-like variable.
\end{assumption}

Note that Assumption \ref{as:statebound} does not impose a fixed controller, and therefore, is very general in its application. The assumption on fulfilling constraints is more an assumption on the equipment than the controller since any decent controller should be able to fulfill constraints with enough controllability. Lastly, not relying on a fixed state definition further increases generality, while still having the option to construct a state from standard measurements (see Section~\ref{sec:statedef}).

Through Assumption~\ref{as:statebound}, we have that the overall state dynamics behave like a switched system, distinguishing the cases where $p_t=p_{\text{n}, t} + r_t$ and $p_t=p_{\text{con}, t}$. Assuming a linear approximation of $h$, for simplicity, around the nominal operation point, we get that
\begin{equation}
\begin{split}
    h(s_t, \bm e_t, p_t) - & h(s_{\text{n}, t}, \bm e_t, p_{\text{n}, t}) \\
    &\approx \begin{cases}
        a^{+} r_t \text{ if } r_t>0 \\
        a^{-} r_t \text{ if } r_t<0 \\
        b_f (s_t-s_{\text{n},t}) \text{ if } r_t=0
    \end{cases}.
\label{linear1}
\end{split}
\end{equation}
Due to the stochasticity of $s_t$, notice that $a^{+}$, $a^{-}$ and $b_f$ are in general stochastic.

We make a few further assumptions on the nominal state evolution and the coefficients $a^{+}$, $a^{-}$, and $b_f$ that will ease the rest of the analysis.
\begin{assumption}
\label{as:discreteassum}
In \eqref{linear1}, we assume that
\begin{enumerate}[label=(\alph*)]
    \item The request-free nominal state evolution $s_{\text{n}, t}$ can be well approximated by a function  $f:\mathbb{R}^{m} \to \mathbb{R}$ of the current and recent past weather variables, denoted hereafter by
    $\bm{e}_t := [\bm e_{1,t}^\top,..., \bm e_{n,t}^\top]^\top \in \mathbb{R}^{m}, \bm e_{i,t}\in \mathbb{R}^\eta , i=1,\dots, n, \quad m=n\eta$,
    
    \item $b_f\in\mathbb{R}$ is a constant, 
    
    \item $a^{+}$ and $a^{-}$ are real-valued random variables on a finite probability space. 
\end{enumerate}
\end{assumption}

Assumption \ref{as:discreteassum}a) states that the request-free state evolution can be well-captured by a deterministic function that only depends on past and current weather variables. $n$ denotes the number of measured variables, and $\eta$ denotes the number of considered time steps. Despite being strong, this modeling assumption for thermal systems (in particular building assets) often leads to good results in practice because errors do not accumulate. Note that this assumption could be replaced by modeling the nominal state with a GP instead to take uncertainty into account, at the price of complicating further the analysis. Assumption \ref{as:discreteassum}b) is justified by the fact that the coefficient $b_{f}$ has little influence on the flexibility quantification discussed here, see Sections~\ref{sec:paramid} and \ref{sec:appl}. For a reasonable choice of $b_f$ see Section~\ref{sec:paramid}. Finally, Assumption \ref{as:discreteassum}c) is useful to extract the distributions of $a^{+}$ and $a^{-}$  directly from data. The random variable assumption also captures possibly random state behavior and will be helpful in the uncertainty quantification explained in Section~\ref{sec:setfeasreq}. We finally end up with the following state equation:

\begin{definition}[Battery Model]
\label{def:batmod}
Let $r_t\in \mathbb{R}$ denote a relative consumption request at time $t$ with respect to a baseline, and let $r_t^+=\max(r_t, 0), r_t^-=\min(r_t, 0)$ be the positive and negative part of the request. With the external influences given by $\bm{e}_t\in \mathbb{R}^m$ and a function $f:\mathbb{R}^m \to \mathbb{R}$ to approximate the nominal state $s_{\text{n}}$ in request-free operation, we model the state evolution as
\begin{align}
    \hat{s}_{t+1} ={ } &\hat{s}_t + a^+ r_t^+ + a^- r_t^- + b_f (f(\bm{e}_t) - \hat{s}_t) \chi_{r_t} \nonumber \\
    &+ f(\bm{e}_{t+1})-f(\bm{e}_t).
\label{eq:battery}
\end{align}
The state change depends on a parameter $b_f\in\mathbb{R}$, while $a^+$ and $a^-$ are assumed to be real-valued random variables on a finite probability space.
\end{definition}

Note that the approximated state $\hat{s}_t$ given by the battery model is no longer bounded between 0 and 1. Furthermore, $\hat{s}_t$ taking a value smaller than 0 or larger than 1 corresponds to a situation where the true state reaches its boundaries and the building controller is not able to fulfill the request.

\subsection{Parameter and Sample Space Identification}
\label{sec:paramid}

The learning of the battery model is a two-step approach. First, $f(\bm{e}_t)$ is learned from data obtained during the nominal operation of the building's controller. Then the parameter $b_f$ and the sample spaces of $a^+ $ and $ a^-$ can be identified from request periods, followed by recovery periods. 

For the learning approach, we use the following formulation that describes the dependence of the predicted state $\hat{s}_k$ on the starting state $s_0$ and the applied requests $\bm{r}_{0:k-1}$.

\begin{lemma}
\label{lem:sk}
For a given state $s_0$ and a request trajectory $\bm{r}\in\mathbb{R}^{k}$, based on \eqref{eq:battery}, the state $\hat{s}_k$ is given by
\begin{align}
   \hat s_k ={ } &(1-b_f)^{q_0^k} s_0  + \sum_{l=0}^{k-1}(1-b_f)^{q_{l+1}^k}(f(\bm{e}_l)b_f\chi_{r_l} \nonumber\\
   &+a^+ r_l^+ +a^- r_l^- + f(\bm{e}_{l+1})-f(\bm{e}_l)),
\label{eq:sk}
\end{align}
with $q_l^k=\sum_{i=l}^{k-1}\chi_{r_i}$.
\end{lemma}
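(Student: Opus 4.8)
The plan is to recognize \eqref{eq:battery} as a scalar linear time-varying recursion in $\hat{s}_t$ and then to unroll it. First I would collect the terms in \eqref{eq:battery} that multiply $\hat{s}_t$: grouping the $b_f(f(\bm{e}_t)-\hat{s}_t)\chi_{r_t}$ term gives the compact one-step map $\hat{s}_{t+1} = (1-b_f\chi_{r_t})\hat{s}_t + g_t$, where the driving term is $g_t = a^+ r_t^+ + a^- r_t^- + b_f f(\bm{e}_t)\chi_{r_t} + f(\bm{e}_{t+1}) - f(\bm{e}_t)$, i.e.\ exactly the bracketed expression appearing in \eqref{eq:sk}.

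The key observation, and the step that makes the powers of $(1-b_f)$ appear, is that $\chi_{r_t}$ takes values only in $\{0,1\}$, so $1 - b_f\chi_{r_t} = (1-b_f)^{\chi_{r_t}}$ (both sides equal $1$ when $r_t \neq 0$ and $1-b_f$ when $r_t = 0$). The one-step map is therefore $\hat{s}_{t+1} = (1-b_f)^{\chi_{r_t}}\hat{s}_t + g_t$, a multiplicative coefficient whose iterated products turn into sums in the exponent, which is precisely what the quantities $q_l^k$ count.

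I would then prove \eqref{eq:sk} by induction on $k$. For the base case $k=0$ the claimed formula reduces to $(1-b_f)^{q_0^0}s_0$ with an empty sum; since $q_0^0 = \sum_{i=0}^{-1}\chi_{r_i} = 0$ by the empty-sum convention, this equals $s_0 = \hat{s}_0$. For the inductive step I apply one more iteration, $\hat{s}_{k+1} = (1-b_f)^{\chi_{r_k}}\hat{s}_k + g_k$, substitute the induction hypothesis for $\hat{s}_k$, and use the exponent identity $q_l^{k+1} = q_l^k + \chi_{r_k}$ (immediate from $q_l^{k+1} = \sum_{i=l}^{k}\chi_{r_i}$) to fold the new factor $(1-b_f)^{\chi_{r_k}}$ into each term. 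The leading term becomes $(1-b_f)^{q_0^{k+1}}s_0$, the existing sum terms pick up the extra exponent to give $(1-b_f)^{q_{l+1}^{k+1}}g_l$, and the fresh term $g_k$ enters with exponent $q_{k+1}^{k+1}=0$, completing the sum up to $l=k$.

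The bookkeeping with the exponents $q_l^k$ is the only place requiring care — in particular keeping the two index shifts ($q_0^k$ on the initial state versus $q_{l+1}^k$ inside the sum) consistent — but once the indicator-to-power identity is in hand the recursion is standard and the rest is routine. Alternatively, one can bypass the induction entirely by invoking the closed-form solution of a linear time-varying scalar system and identifying $\prod_{i=l+1}^{k-1}(1-b_f)^{\chi_{r_i}} = (1-b_f)^{q_{l+1}^k}$.
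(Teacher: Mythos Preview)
Your proof is correct and follows exactly the approach the paper indicates: the paper omits the argument for brevity and states only that it ``follows from induction,'' which is precisely the route you take, including the natural observation $1-b_f\chi_{r_t}=(1-b_f)^{\chi_{r_t}}$ that converts the time-varying coefficient into a product of powers and hence into the exponents $q_l^k$.
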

\begin{proof}
Omitted for brevity. Follows from induction.
\end{proof}

For identifying the sample spaces of $a^+$ and $a^-$, we consider request sequences $\bm{r}$, either strictly positive or strictly negative respectively (i.e. $r_i>0$ or $r_i<0, i=0,\dots,k-1$). $\bm{r}$ is assumed to be followed by a request-free period, and we denote the corresponding state trajectory by $\bm{s}_{0:k}$. We either have that the requests are fulfillable, i.e. $0<s_i<1, i=0,\dots,k$ which we denote by setting an index $l=k+1$, or not fulfillable at a certain point $l$, with $l=\arg\min q \text{ s.t. } s_q=0 \text{ or } s_q=1$. Assuming a state evolution as given by \eqref{eq:sk}, we have
\begin{equation}
    s_{l-1} = s_0 + \sum_{i=0}^{l-2} a^{+/-} r_i + f(\bm{e}_{l-1})-f(\bm{e}_0).
\end{equation}
Therefore, a sample takes the form  $a^{+/-}=(s_{l-1}-f(\bm{e}_{l-1})-(s_0-f(\bm{e}_0))) /\sum_{i=0}^{l-2} r_i$. 

To identify candidates for $b_f$, we consider sequences $\bm{s}_{0:k}$ that occur after a request period, so that $\bm{r}_{0:k-1} = \mathbf{0}$ and $r_{k}\neq 0$. Furthermore, we only use data from the recovery periods that fulfill $\vert s_t - f(\bm{e}_t)\vert > \delta$ for some threshold $\delta\in\mathbb{R}_+$, for identifying $b_f$, to capture the controller based recovery period and not small perturbations due to model mismatch. 

As in the previous cases, we either have that $\vert s_i-f(\bm{e}_i)\vert >\delta, i=0,\dots,k$ (thus $l=k+1$) or determine $l$ as $l=\arg\min q \text{ s.t. } \vert s_q-f(\bm{e}_q)\vert \leq\delta$. Using the evolution of the battery model from \eqref{eq:sk} for request-free periods, we can formulate the following least-squares problem, whose solutions give samples of the $b_f$ parameter.
\begin{equation}
    \arg\min_{b} \left( (1-b)^{l-1} (s_0 -f(\bm{e}_0))+ f(\bm{e}_{l-1}) -s_{l-1}\right)^2
\end{equation}

In the following, we denote the finite sample spaces of $a^+$ and $a^-$ as 
$\mathcal{P}^+, \mathcal{P}^-$ with $|\mathcal{P}^+| = n_1, |\mathcal{P}^-| = n_2 $. 
Furthermore, we assume an ordering, such that $\mathcal{P}^+=\{a^+_{1}, \dots, a^+_{n_1}: a^+_{i}\leq a^+_{j} \text{ if }i<j\}, \mathcal{P}^-=\{a^-_{1}, \dots, a^-_{n_2}: a^-_{i}\leq a^-_{j} \text{ if }i<j\}$, which will be helpful in Section~\ref{sec:feasreq}. Since these data are the only information we have about $a^+$ and $a^-$, it is natural to use them for constructing the sample spaces and therefore having finite sample spaces.

The parameter $b_f$ is assumed to be fixed after identifying possible candidates, e.g. by taking the maximum or average over the collected samples. This choice is deliberate because treating $b_f$ as stochastic and following through with the approach outlined in Section~\ref{sec:robustfeas} introduces combinatorial issues and nonlinearities in the uncertainty set computation while having a minimal impact on the flexibility envelope computation, due to its influence in the recovery periods only.

\section{The Set of Feasible Requests}
\label{sec:setfeasreq}

In flexibility scenarios, \eqref{eq:battery} is used to determine the feasibility of request trajectories for building assets. As already stated, $\hat{s}_t$ taking a value smaller than 0 or larger than 1 corresponds to a situation where the true state saturates at its boundaries and the building controller is not able to fulfill the relative consumption request. This gives rise to the definition of the set of feasible request trajectories of length $k$, with a given probability level $\alpha$, starting from a state $s_0$:
\begin{align}
    \mathcal{R}^{\alpha}_k(s_0) = \big\{\bm{r}\in \mathbb{R}^k: &\;\mathbb{P}\left\{\mathbf{0} \leq [\hat{s}_0,  \cdots, \hat{s}_k ]^\top \leq \mathbf{1}\right\} \geq 1-\alpha, \nonumber\\ &\hat{s}_0=s_0 \big\},
    \label{eq:feasset}
\end{align}
where the probability is taken element-wise. We are especially interested in  feasible trajectories with constant relative power requests that are used in the following definition of flexibility envelopes:
\begin{definition}[Flexibility Envelope]
\label{def:flexenv}
Let $\bm p =[p_1,...,p_{n_p}]^\top \in\mathbb{R}^{n_p}$ denote a vector of discretized relative power requests and $\bm t = [t_1,...,t_{n_t}]^\top \in\mathbb{R}^{n_t}$ a vector of discrete time steps. Then we define the flexibility envelope $\bm E\in\mathbb{R}^{n_p\times n_t}$ via
\begin{align}
    \bm E_{i,j} = \max_k& \quad k \\
    \text{s.t. } &\quad \bm r_{0:k-1} \in \mathcal{R}_k^{\alpha}(f(\bm e_{t_j}))\nonumber \\
    & \quad r_l = p_i \quad \forall l=0,\dots,k-1. \nonumber
\end{align}
\end{definition}

Each entry in the flexibility envelope gives the number of timesteps that a certain relative power request $p_i$ can be sustained without violating constraints, starting from a given time $t_j$. $n_p$ is the number of desired quantization levels of the relative power requests in the flexibility envelopes, whereas $n_t$ depends on the considered horizon and time discretization. This definition gives one type of flexibility envelope, but equivalently, other types of flexibility envelopes that consider availability times of absolute consumption requests or minimum and maximum available power can be derived from \eqref{eq:feasset}. Simulation results of the flexibility envelopes defined above are presented in Section \ref{sec:simres}.

\subsection{Set Reformulation}
The goal of the following two sections is to reformulate the probabilistic set of feasible request trajectories given in \eqref{eq:feasset} into a deterministic version, using a robust uncertainty set. For this, we can rewrite \eqref{eq:sk} in the two following ways:
\begin{align}
    \hat{s}_k &= c_k + [\bm{a}^+_k, \bm{a}^-_k] \begin{pmatrix}
        \bm{r}^+_k \\ \bm{r}^-_k
    \end{pmatrix} \label{eq:skshort}\\
    &= c_k + \bm{R}_k \begin{pmatrix} a^+ \\ a^- \end{pmatrix}, \label{eq:skshort2}
\end{align}
with $\bm{r}^{+/-}_k= [r_0^{+/-}, \cdots, r_{k-1}^{+/-}]^\top \in \mathbb{R}^{k}$, where 
\begin{align}
    c_k={ }& (1-b_f)^{q_0^k} s_0  + \sum_{l=0}^{k-1}(1-b_f)^{q_{l+1}^k}(f(\bm{e}_l)b_f\chi_{r_l} \nonumber\\
    &+ f(\bm{e}_{l+1})-f(\bm{e}_l))
\end{align}
groups all the non-request parts and 
\begin{align}
 \label{eq:al}
    &\bm{a}^{+/-}_k=[(1-b_f)^{q_{1}^k}, \dots, (1-b_f)^{q_{k}^k}]  a^{+/-} \in \mathbb{R}^{k}\\
    &\bm{R}_k =\left[ \sum_{l=0}^{k-1} (1-b_f)^{q_{l+1}^k}r_l^+, \sum_{l=0}^{k-1} (1-b_f)^{q_{l+1}^k}r_l^-\right] \in \mathbb{R}^2
\end{align} 
group the request parts either depending on $a^+, a^-$, or the request trajectory, and we recall that $q_l^k=\sum_{i=l}^{k-1}\chi_{r_i}$. We can then alternatively write the set of feasible requests as
\begin{equation}
    \mathcal{R}^{\alpha}_k(s_0)
    = \bigcap_{l=0}^k \left\{\bm{r}\in\mathbb{R}^k : \mathbb{P}\left\{\bm{b}_l \leq \bm{A}_l \begin{pmatrix}
        \bm{r}^+_l \\ \bm{r}^-_l
    \end{pmatrix}\right\} \geq 1-\alpha\right\}, \label{eq:feasop2}
\end{equation}
with $\bm{b}_l=[-c_l, c_l-1]^\top\in\mathbb{R}^2$ and $\bm{A}_l = \begin{pmatrix}
        \bm{a}^+_l & \bm{a}^-_l \\ -\bm{a}^+_l & -\bm{a}^-_l
    \end{pmatrix}\in\mathbb{R}^{2\times 2l}$, by using \eqref{eq:skshort}.

In the following, we will consider a single set from the intersection in \eqref{eq:feasop2} and denote $[\bm{a}^+_l, \bm{a}^-_l]$ as $\bm{a}_l$. From Assumption \ref{as:discreteassum}c), we have that  the unknown $\bm{a}_l$ is a $\mathbb{R}^{2l}$-valued random variable on a finite probability space $(\Omega, \mathcal{F}, \mathbb{P})$ with $|\Omega|=N, \mathcal{F}=2^{\Omega}$. We can construct the support by combining all possible $a^+, a^-$ from the identified sets $\mathcal{P}^{+}, \mathcal{P}^{-}$. The sample set for $\bm{a}_l$ is denoted by $\mathcal{A}_l=\{\bm{a}_{l,1}, \dots, \bm{a}_{l,N}\}$ with $|\mathcal{A}_l|  = n_1 n_2 =:N$. The data matrix is denoted as $\bm{D}_l=[\bm{a}_{l,1}^\top,\dots,\bm{a}_{l,N}^\top]\in \mathbb{R}^{2l\times N}$.

On the one hand, having $\bm{a}_l$ as a random variable on a finite probability space is restrictive, since the true sample space $\Omega$ might be larger or even continuous. On the other hand, since data is the only knowledge we have about $\bm{a}_l$, this assumption is aligned with the data-driven approach, and useful in practice (see \cite[Assumption~3.1]{bertsimas_constructing_2009}).

\subsection{Robustness via Conditional Value at Risk}
\label{sec:robustfeas}

Utilizing the new formulation of the set of feasible trajectories \eqref{eq:feasop2}, we will now exploit a specific risk measure, the Conditional Value at Risk (CVaR), as a way to specify how the uncertainty is dealt with. Concretely, we will consider user preferences to trade off conservativeness and the size of the feasible set, relying on results from \cite{bertsimas_constructing_2009}. The presented derivations are not unique for CVaR but also hold for general coherent risk measures. However, using CVaR, together with two straightforward assumptions, gives us a directly usable, tightened version of the set of feasible requests, which is why we focus our discussions on this specific risk measure.

Here, we will only present the main concepts necessary for our specific approach. For some additional insight, the reader is referred to e.g. \cite{delbaen_2002}.

\begin{definition}[Conditional Value at Risk]
\label{def:cvar}
Let $(\Omega, \mathcal{F}, \mathbb{P})$ be a finite probability space with $\Omega=\{\omega_1, \dots,\omega_N\}$  and let $\mathcal{X}$ be a linear space of random variables on $\Omega$. We define the CVaR for $X\in\mathcal{X}$ with probability level $\alpha$ as
\begin{equation}
    \cvar_\alpha(X)=\max_{\bm{q}\in\mathcal{Q}}\frac{1}{N}\sum_{i=1}^N - q_iX(\omega_i), 
\end{equation}
with $\mathcal{Q}$ its family of generating measures, given by $\{\bm{q}\in\Delta^N: q_i\leq \frac{\mathbb{P}(\omega_i)}{\alpha}\}$.
\end{definition}

An intuition about the meaning of CVaR can be drawn from its continuous probability space definition for atomless distributions (this intuition is inexact in the finite case, but nevertheless helpful). If we consider a constraint $\bm{a}^\top \bm{x} \geq b$ for a random variable $\bm{a}\in\mathbb{R}^l$, then $\cvar_\alpha(\bm{a}^\top \bm{x} -b)$ gives the expected constraint violation in the $\alpha$-\% worst cases. This motivates the use of the risk-aversion constraint $\cvar_\alpha(\bm{a}^\top \bm{x} -b)\leq 0$. Note that this constraint implies both constraint satisfaction in expectation and constraint satisfaction with probability $\geq 1-\alpha$.

We will apply risk aversion constraints to the individual probabilistic constraints in \eqref{eq:feasop2} and utilize the reformulation with robust uncertainty sets presented in \cite[Thm. 3.1]{bertsimas_constructing_2009}. This is possible since CVaR is a coherent risk measure (i.e. it fulfills the properties of monotonicity, translation invariance, convexity, and positive homogeneity). Using risk aversion constraints instead of probabilistic constraints leads to a smaller feasible set for the same uncertainty level $\alpha$ since the former represent a tightened version of the latter.
 
\begin{theorem}
\label{thm:uncequiv}
We have
\begin{align}
    &\left\{\bm{r}\in\mathbb{R}^k: \cvar_{\alpha}\left(\bm{A}_l\begin{pmatrix}
        \bm{r}^+_l \\ \bm{r}^-_l
    \end{pmatrix}-\bm{b}_l\right)\leq 0\right\} \\
    &= \left\{\bm{r}\in\mathbb{R}^k: [\bm{a}, -\bm{a}]^\top\begin{pmatrix}
        \bm{r}^+_l \\ \bm{r}^-_l
    \end{pmatrix}\geq \bm{b}_l \quad \forall \bm{a} \in \mathcal{U}^l_{\alpha}\right\},
    \label{eq:uncertequiv}
\end{align}
with a slight abuse of notation for the constraint-wise CVaR application, where $\mathcal{U}^l_{\alpha} = \conv(\{\bm{D}_l\bm{q}:\bm{q}\in\mathcal{Q}\})$, and we recall that $\bm{D}_l$ is the data matrix, and $\mathcal{Q}$ the family of generating measures for $\cvar_{\alpha}$.
\end{theorem}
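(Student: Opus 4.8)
The plan is to prove the set equality by splitting the vector-valued CVaR constraint into its two scalar rows and then substituting the variational definition of $\cvar_\alpha$ from Definition~\ref{def:cvar}; this reproduces, for the specific measure CVaR, the coherent-risk/uncertainty-set correspondence of \cite[Thm.~3.1]{bertsimas_constructing_2009}. Writing $\bm\rho = [(\bm r^+_l)^\top, (\bm r^-_l)^\top]^\top$ for the stacked request and recalling $\bm a_l = [\bm a^+_l, \bm a^-_l]$ together with $\bm b_l = [-c_l, c_l-1]^\top$, the two rows of $\bm A_l\bm\rho - \bm b_l$ are $\bm a_l^\top\bm\rho + c_l$ and $-\bm a_l^\top\bm\rho + (1-c_l)$, which encode $\hat s_l\geq 0$ and $\hat s_l\leq 1$. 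Interpreting the constraint-wise CVaR as in the statement, the left-hand set is the set of $\bm r$ satisfying both $\cvar_\alpha(\bm a_l^\top\bm\rho + c_l)\leq 0$ and $\cvar_\alpha(-\bm a_l^\top\bm\rho + (1-c_l))\leq 0$.

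Next I would treat a single row, say the first. Evaluated at scenario $\omega_i$ the random vector $\bm a_l$ is the $i$-th column $\bm a_{l,i}$ of the data matrix $\bm D_l$, so $\bm D_l\bm q = \sum_i q_i\bm a_{l,i}$ for any $\bm q\in\mathcal Q$. Substituting Definition~\ref{def:cvar} and using $\sum_i q_i = 1$, the constraint $\cvar_\alpha(\bm a_l^\top\bm\rho + c_l)\leq 0$ expands to $\max_{\bm q\in\mathcal Q}\frac{1}{N}[-(\bm D_l\bm q)^\top\bm\rho - c_l]\leq 0$; the positive factor $\frac{1}{N}$ is immaterial for a sign constraint, so this is exactly $\bm a^\top\bm\rho\geq -c_l$ for every $\bm a\in\{\bm D_l\bm q:\bm q\in\mathcal Q\}$. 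Running the same computation on the second row gives $-\bm a^\top\bm\rho\geq c_l-1$ for every such $\bm a$, and these two families of inequalities are precisely the rows of $[\bm a,-\bm a]^\top\bm\rho\geq\bm b_l$.

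Finally I would replace the generated set $\{\bm D_l\bm q:\bm q\in\mathcal Q\}$ by its convex hull $\mathcal U^l_\alpha$. Each row constraint is affine in $\bm a$, so its worst case over any set equals its worst case over the convex hull; equivalently, $\mathcal Q$ is a bounded polyhedron and $\bm q\mapsto\bm D_l\bm q$ is linear, so the generated set is already convex and coincides with $\mathcal U^l_\alpha$. Since the universal quantifier distributes over the conjunction of the two rows, requiring $[\bm a,-\bm a]^\top\bm\rho\geq\bm b_l$ for all $\bm a\in\mathcal U^l_\alpha$ is equivalent to the two per-row conditions above, which yields the right-hand set and closes the equality.

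The step demanding the most care is the sign bookkeeping of the constraint-wise reduction: matching the offsets $-c_l$ and $c_l-1$ to the entries of $\bm b_l$ relies on tracking the minus-sign convention inside Definition~\ref{def:cvar}, and one must check that a single uncertainty set $\mathcal U^l_\alpha$ simultaneously certifies both the lower bound $\hat s_l\geq 0$ (worst case minimising $\bm a^\top\bm\rho$) and the upper bound $\hat s_l\leq 1$ (worst case maximising $\bm a^\top\bm\rho$). The convexification is routine once the affine dependence on $\bm a$ is noted, but it is worth recording explicitly why passing to $\conv(\{\bm D_l\bm q\})$ changes neither side of the equality.
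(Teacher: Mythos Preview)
Your proposal is correct and follows the same overall structure as the paper's proof: split the constraint-wise CVaR into its two scalar rows, reformulate each as a robust constraint over the same uncertainty set, and recombine. The only difference is cosmetic---the paper invokes \cite[Thm.~3.1]{bertsimas_constructing_2009} as a black box for each row, whereas you unfold Definition~\ref{def:cvar} directly and verify the equivalence by hand (and note explicitly that $\{\bm D_l\bm q:\bm q\in\mathcal Q\}$ is already convex); this makes your argument more self-contained but is not a genuinely different route.
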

\begin{proof}
We can write the left-hand side as 
\begin{align}
    &\left\{\bm{r}\in\mathbb{R}^k: \cvar_{\alpha}\left(\bm{a}_l\begin{pmatrix}
        \bm{r}^+_l \\ \bm{r}^-_l
    \end{pmatrix}+c_l\right)\leq 0\right\} \\
    &\cap \left\{\bm{r}\in\mathbb{R}^k: \cvar_{\alpha}\left(-\bm{a}_l\begin{pmatrix}
        \bm{r}^+_l \\ \bm{r}^-_l
    \end{pmatrix}-c_l+1\right)\leq 0\right\}.
\end{align}
Using the robust uncertainty set reformulation from \cite{bertsimas_constructing_2009} Theorem 3.1. for both sets in the intersection, we get
\begin{align}
    &\left\{\bm{r}\in\mathbb{R}^k: \bm{a}^\top\begin{pmatrix}
        \bm{r}^+_l \\ \bm{r}^-_l
    \end{pmatrix}\geq -c_l \quad \forall \bm{a} \in \mathcal{U}^l_{\alpha}\right\} \\
    & \cap \left\{\bm{r}\in\mathbb{R}^k: -\bm{a}^\top\begin{pmatrix}
        \bm{r}^+_l \\ \bm{r}^-_l
    \end{pmatrix}\geq c_l-1 \quad \forall \bm{a} \in \mathcal{U}^l_{\alpha}\right\}.
\end{align}
Since the same uncertainty sets are used, we can combine them in the form of \eqref{eq:uncertequiv}.
\end{proof}

Theorem~\ref{thm:uncequiv} is not limited to CVaR, but holds for general coherent risk measures. It provides a closed form description of the set of request trajectories that fulfill the risk aversion constraint, by taking those that are robustly feasible for the uncertainty set $\mathcal{U}^l_{\alpha}$. For the feasibility of a given $\bm{r}\in\mathbb{R}^k$, this implies checking constraint satisfaction for all $\bm{a}\in\mathcal{U}^l_{\alpha}$ and $l=1,\dots, k+1$.

From Definition~\ref{def:cvar}, we can directly observe the uncertainty set construction as in Theorem~\ref{thm:uncequiv} for $\cvar_\alpha$, namely $\mathcal{U}^l_{\alpha}=\conv(\{\bm{D}_l\bm{q}: \bm{q}\in\Delta^N, q_i\leq\frac{\mathbb{P}(\bm a_{l,i})}{\alpha}\})$, with the data matrix $\bm{D}_l$. The following theorem states a more practical form of this uncertainty set under certain assumptions.
\begin{theorem}
Let the probabilities on the finite probability space be uniform (i.e. $\mathbb{P}(\bm a_{l,i})=\frac{1}{N}$), and $\alpha$ chosen as $\frac{j}{N}$ for some $j\in\{1,\dots,N\}$. Then $\mathcal{U}^l_{\alpha}$ is the convex hull of all $j$-point averages in $\mathcal{A}_l$, i.e.
\begin{equation}
    \mathcal{U}^l_{\alpha} = \conv\left(\left\{\frac{1}{j}\sum_{i\in J}\bm{a}_{l,i}: J\subset \{1,\dots,N\}, |J|=j\right\}\right).\label{eq:jpointU}
\end{equation}
\end{theorem}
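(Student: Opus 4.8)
The plan is to reduce the statement to the vertex structure of the polytope $\mathcal{Q}$ (a scaled hypersimplex) and then transport it through the linear map $\bm{D}_l$. First I would specialize the generating family from Definition~\ref{def:cvar} to the present hypotheses: with $\mathbb{P}(\bm a_{l,i})=1/N$ and $\alpha=j/N$ the bound reads $\mathbb{P}(\bm a_{l,i})/\alpha = 1/j$, so that
\[
\mathcal{Q}=\left\{\bm q\in\Delta^N:\; q_i\leq \tfrac{1}{j},\ i=1,\dots,N\right\}.
\]
Since $\mathcal{U}^l_{\alpha}=\conv(\{\bm{D}_l\bm q:\bm q\in\mathcal{Q}\})$ and $\mathcal{Q}$ is a bounded polytope, its image $P:=\{\bm{D}_l\bm q:\bm q\in\mathcal{Q}\}$ under the linear map $\bm{D}_l$ is again a polytope, and equals the convex hull of the images of the vertices of $\mathcal{Q}$. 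Hence it suffices to show that every vertex of $\mathcal{Q}$ is one of the vectors $\bm q^{J}$ defined by $q^{J}_i=1/j$ for $i\in J$ and $q^{J}_i=0$ otherwise, over all $J\subset\{1,\dots,N\}$ with $|J|=j$, and then to evaluate $\bm{D}_l\bm q^{J}$.

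The core step is this vertex characterization, and it is where the assumption $\alpha=j/N$ with integer $j$ is indispensable. I would first rule out interior coordinates: if a feasible $\bm q$ had two coordinates with $0<q_a,q_b<1/j$, then adding $\varepsilon$ to $q_a$ and subtracting $\varepsilon$ from $q_b$ keeps both the simplex constraint and the bounds satisfied for small $\varepsilon>0$, writing $\bm q$ as the midpoint of two distinct feasible points; hence a vertex has at most one strictly interior coordinate. The integrality argument then excludes even a single one: if exactly one coordinate took a value $v\in(0,1/j)$ while $s$ coordinates sat at $1/j$, the equality $\sum_i q_i=1$ would force $v=1-s/j$, and requiring $0<v<1/j$ would demand simultaneously $s<j$ and $s\geq j$, a contradiction. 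Therefore every vertex has all coordinates in $\{0,1/j\}$, and $\sum_i q_i=1$ forces exactly $j$ of them to equal $1/j$, i.e.\ $\bm q=\bm q^{J}$ for some $J$ with $|J|=j$. Each such $\bm q^{J}$ is itself feasible, so its image $\bm{D}_l\bm q^{J}$ lies in $P$.

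Finally I would identify the images: since the columns of $\bm{D}_l$ are the samples $\bm a_{l,i}$, we have $\bm{D}_l\bm q^{J}=\sum_{i\in J}\tfrac{1}{j}\bm a_{l,i}=\tfrac{1}{j}\sum_{i\in J}\bm a_{l,i}$, exactly the $j$-point averages in \eqref{eq:jpointU}. Combining the polytope-image identity ($P=\conv(\{\bm{D}_l\bm q^{J}\})$, using both that the vertices lie among the $\bm q^{J}$ and that each $\bm q^{J}\in\mathcal{Q}$) with the idempotence of $\conv$ yields the claim. I expect the vertex characterization to be the only genuine obstacle; the remaining steps are bookkeeping. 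I would also check the degenerate cases within the same argument: $j=1$ gives $\mathcal{Q}=\Delta^N$ with the standard basis vectors as vertices and the $1$-point averages $\bm a_{l,i}$, while $j=N$ collapses $\mathcal{Q}$ to the barycenter and the single average $\tfrac{1}{N}\sum_i\bm a_{l,i}$, both consistent with \eqref{eq:jpointU}.
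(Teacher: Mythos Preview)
Your proposal is correct and follows essentially the same route as the paper: both reduce the claim to the fact that $\mathcal{Q}=\{\bm q\in\Delta^N:q_i\le 1/j\}$ is the convex hull of the vectors $\bm q^{J}$ and then push this through the linear map $\bm D_l$. The only difference is that the paper simply asserts $\conv(Q)=\mathcal{Q}$ for $Q=\{\bm q^{J}:|J|=j\}$, whereas you actually supply the vertex argument (the perturbation step ruling out two interior coordinates, followed by the integrality step ruling out one); in that sense your write-up is more self-contained.
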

\begin{proof}
The first direction, i.e. ``$\supseteq$'',
follows directly from the definition of the family of generating measures.

For ``$\subseteq$'', pick an arbitrary $\bm{a}\in \mathcal{U}^l_\alpha$ and observe that we can write it as $\bm{a}=\bm{D}_l\bm{q}$ for some $\bm{q}\in \{\bm{q}\in\Delta^N: q_i\leq \frac{1}{j}\forall i\}$. Furthermore, we have $\conv(Q)=\{\bm{q}\in\Delta^N: q_i\leq \frac{1}{j}\forall i\}$ for $Q:=\{\bm{q}\in \mathbb{R}^N: q_i = \frac{1}{j}\forall i\in J, q_i = 0  \forall i \not \in J,J\subset \{1,\dots,N\}, |J|=j \}$. Since each element $\bar{\bm{a}}=\bm{D}_l\bar{\bm{q}}$ with $\bar{\bm{q}}\in\conv (Q)$ is in the right-hand-side, we conclude that $\bm{a}$ is an element of the right-hand-side. 
\end{proof}
These uncertainty sets given by the $j$-point averages are the sets we will focus on in the following, for two reasons: Firstly, we do not consider a weighting of the samples, which makes the choice of uniform probabilities natural. Secondly, for $N$ large enough, the choice of $\alpha$ as $\frac{j}{N}$ offers a fine discretization, while also providing a straightforward way of computing the uncertainty set. The advantage of computability, therefore, outweighs the limitation of choice through discretization.

We can then formulate the tightened set of feasible request trajectories, based on the CVaR uncertainty sets as 
\begin{equation}
    \mathcal{C}^{\alpha}_k(s_0) = \bigcap_{l=0}^k \left\{\bm{r}\in\mathbb{R}^k :[\bm{a}, -\bm{a}]^\top\begin{pmatrix}
        \bm{r}^+_l \\ \bm{r}^-_l
    \end{pmatrix}\geq \bm{b}_l \quad \forall \bm{a} \in \mathcal{U}^l_{\alpha} \right\}.
    \label{eq:feassettight}
\end{equation}

\section{Application}
\label{sec:appl}
We will now use the battery model with uncertainty quantification to show how the feasibility of constant requests can be tested and used for the computation of flexibility envelopes, which is shown in a simulation example.

\subsection{Feasibility of Constant Request Trajectories}
\label{sec:feasreq}
To test whether a request trajectory $\bm{r}\in \mathbb{R}^k$ with $r_i = p, i=0,\dots,k-1$, is classified as feasible, i.e. its containment in $\mathcal{C}^\alpha_k(s_0)$, we have to build the uncertainty sets $\mathcal{U}^l_{\alpha}$ for a given $\alpha=\frac{j}{N}$, and check the constraints in \eqref{eq:feassettight}. However, characterizing the convex hull of all $j$-point averages is not straightforward, and constructing and testing all combinations is combinatorially infeasible already for moderate values of $j$ and $N$. To alleviate this issue, we make the following observation. Due to the linearity of the $\bm{a}^{+/-}_l$ from \eqref{eq:skshort} in $a^{+/-}$, we can consider the $j$-point averages of the $(a^+, a^-)$ pairs in $\mathcal{P}^+\times \mathcal{P}^-$, instead of their induced $\bm{a}_l$ samples, and test feasibility for \eqref{eq:skshort2} because of its equivalence to \eqref{eq:skshort}. We denote this set of $j$-point averages by $\mathcal{P}_j = \{\frac{1}{j} \sum_{i=1}^j\bm{a}_i: \bm{a}_i\in\mathcal{P}^+\times\mathcal{P}^-, \bm{a}_l\neq\bm{a}_k\text{ for }l\neq k\}$.

However, since the relative request is constant, either only $a^+$ or $a^-$ has to be considered regarding the feasibility problem, depending on the sign of the request. Therefore, we can restrict ourselves to testing feasibility for the worst-case parameters in $\mathcal{P}_j$. Since the sets $\mathcal{P}^{+}$ and $\mathcal{P}^{-}$ are increasingly ordered, these parameters are given by $(\tilde{a}^+, \tilde{a}^-)$ with
\begin{align}
    \tilde{a}^+ &= \frac{1}{j} \left( n_2 \sum_{i=0}^{\floor{\frac{j}{n_2}}} a^+_{n_1-i} + (j\text{ mod } n_2) a^+_{n_1 - \floor{\frac{j}{n_2}}-1}\right),\\
    \tilde{a}^- &= \frac{1}{j} \left( n_1 \sum_{i=0}^{\floor{\frac{j}{n_1}}} a^-_{n_2-i} + (j\text{ mod } n_1) a^-_{n_2 - \floor{\frac{j}{n_1}}-1}\right).
\end{align}

We use this method to quantify available flexibility for each time in a day ahead prediction with the flexibility envelopes from Definition~\ref{def:flexenv}. This is done in an iterative fashion by increasing the length of the considered trajectory, to get the maximum number of steps $k$ and then converting it to time. We limit the maximum sustainable time to 24 hours since forecast errors might deter the quality of longer predictions. An example of flexibility envelopes can be found in Fig.~\ref{fig:flex_envelopes}.

\begin{figure*}[!t]
    \centering
    \includegraphics[width=0.85\textwidth]{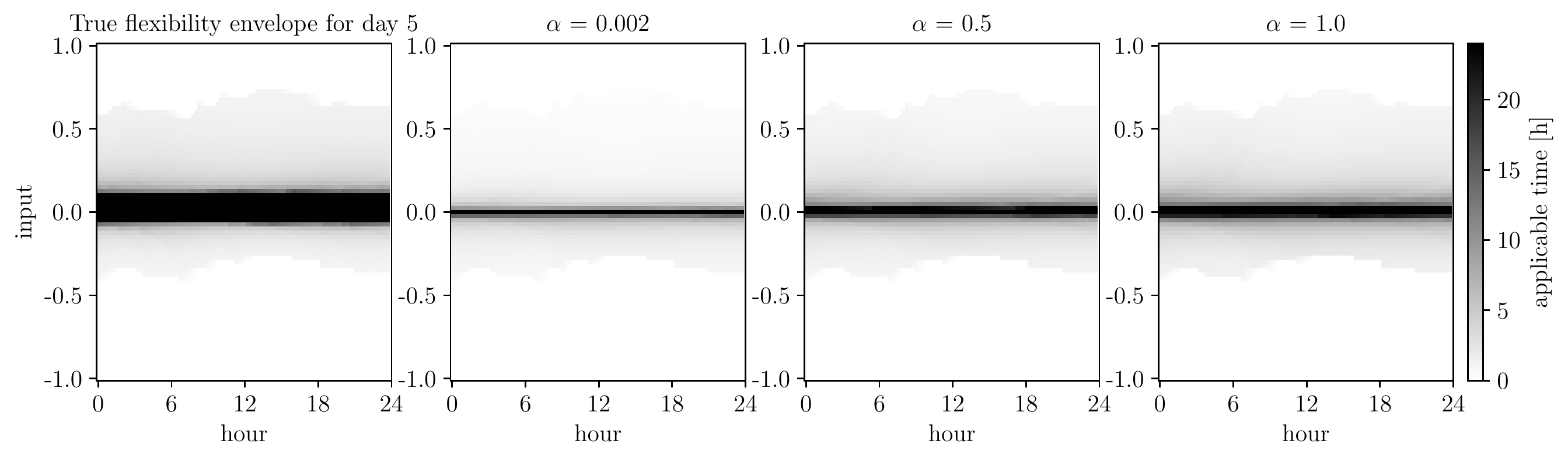}
    \caption{Flexibility envelopes for day 5 of the test data. First: True flexibility envelope. Second to fourth: Flexibility envelope predictions for different $\alpha$ values.}
    \label{fig:flex_envelopes}
\end{figure*}

\subsection{State Definition}
\label{sec:statedef}
In the experiments, we use a specific state definition that fulfills the requirements of the abstract state $s_t$ from Section \ref{subsec:model}. The two key quantities for this are the available time of running the flexible assets of the building at minimum power (denoted by $\ubar{\Delta}_t$) and the available time for maximum power (denoted by $\bar{\Delta}_t$), without violating thermal constraints. We assume that these are provided at each time step $t$ by the controller. This could either be through directly providing these quantities or by providing standard measurements from which these quantities can be derived. An example for heat pumps is given in  \cite[Eqn. 8]{gasser_predictive_2021}, with 
\begin{equation*}
    \ubar{\Delta}_t =  \frac{T_t-T_{\min}}{P_{\text{loss},t}}, \;
    \bar{\Delta}_t =  \frac{T_{\max}-T_t}{P_{\max} - P_{\text{loss},t}},
\end{equation*}
where the current temperature is given by $T_t$, the upper and lower temperature limits by $T_{\max}$ and $ T_{\min}$ respectively, the thermal power capacity of the heat pump by $P_{\max}$, and the average losses at time $t$, by $P_{\text{loss},t}$.

Following Assumption~\ref{as:statebound}, we assume that $\ubar{\Delta}_t + \bar{\Delta}_t>0$. 

\begin{definition}[State Variable]
We define the state $s_t\in \mathbb{R}$ of a building at time $t$ as 
\begin{equation}
    s_t := \frac{\ubar{\Delta}_t}{\ubar{\Delta}_t + \bar{\Delta}_t}.
    \label{eq:state}
\end{equation}
\end{definition}

By definition, we get that $s_t \in [0,1]$. Moreover, this state captures the desired properties described in Section~\ref{subsec:model}.

\subsection{Simulation Results}
\label{sec:simres}
We consider the flexibility envelope prediction for the SimpleHouseRad-v0 model from the simulation model library Energym \cite{scharnhorst_energym}. SimpleHouseRad-v0 is a lightweight Modelica model house with a 5-minute simulation timestep, modeled as a single zone, equipped with a heat pump whose electrical power fraction is the control input (i.e. it is in the range $[0,1]$). The building model is controlled by a PI controller that measures and reports the state introduced in \eqref{eq:state} based on a simplified model of the building. This PI controller is also used to track the request trajectories as close as possible without violating the temperature bounds of 19 and 24 °C. 

Data collection for constructing the battery model is performed during the first 6 weeks of a year, using measurements of the external temperature and irradiance from the city of Basel, Switzerland. In the first three weeks, no requests are sent to the building, such that state data under nominal controller operation is collected for learning the nonlinear model $f(\bm{e}_t)$, using a kernel ridge regression model with squared exponential kernel. During the second three weeks, random constant requests are sent to the building for random durations between 1 hour and 4 hours, alternating with request-free periods of 4 hours to 15 hours. Since the control input is the heat pump power fraction, we consider input requests instead of power requests. This data collection resulted in a total of 22 samples for $a^+$ and 20 samples for $a^-$, giving $N=440$ as the size of the discrete sample space.

We compute the flexibility envelopes of 10 days, starting from the 22nd of January, for a weather file from Lausanne, Switzerland. Different values of the uncertainty parameter $\alpha$ are used, and we compare the results with the true available flexibility. These true flexibility envelopes are computed by running the relative requests on the simulation model itself and observing violations of the temperature bounds.

An example of this evaluation is given in Fig.~\ref{fig:flex_envelopes} for day 5 of the chosen 10 days in the test data. A decrease in conservativeness is observable for an increase in $\alpha$.

The pointwise predicted availability (in number of timesteps) vs. the true availability of the requests, is shown in Fig.~\ref{fig:flex_accuracy}. Points that lie on the diagonal or slightly above are desirable, while points that are below the diagonal represent predictions that are more optimistic than the actual availability (and are therefore infeasible). For this specific day, we observe about $0.2\%$ infeasible predictions for $\alpha=\frac{1}{440}\approx 0.002$, while for $\alpha=0.5$ and $\alpha=1.0$ the infeasible predictions make up about $5.1\%$ and $10.7\%$ respectively.

\begin{figure*}[!t]
    \centering
    \includegraphics[width=0.85\textwidth]{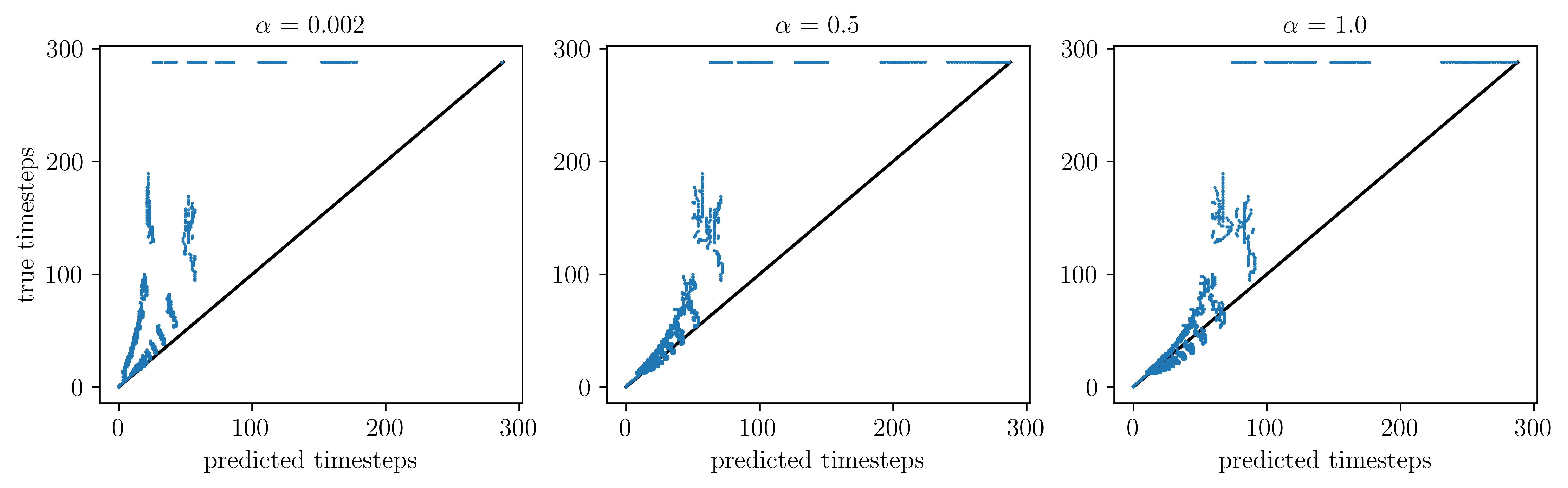}
    \caption{Predicted vs. truly available timesteps in the flexibility envelopes for different values of $\alpha$ on day 5 of the test data.}
    \label{fig:flex_accuracy}
\end{figure*}

\begin{figure}[!t]
    \centering
    \includegraphics[width=0.6\columnwidth]{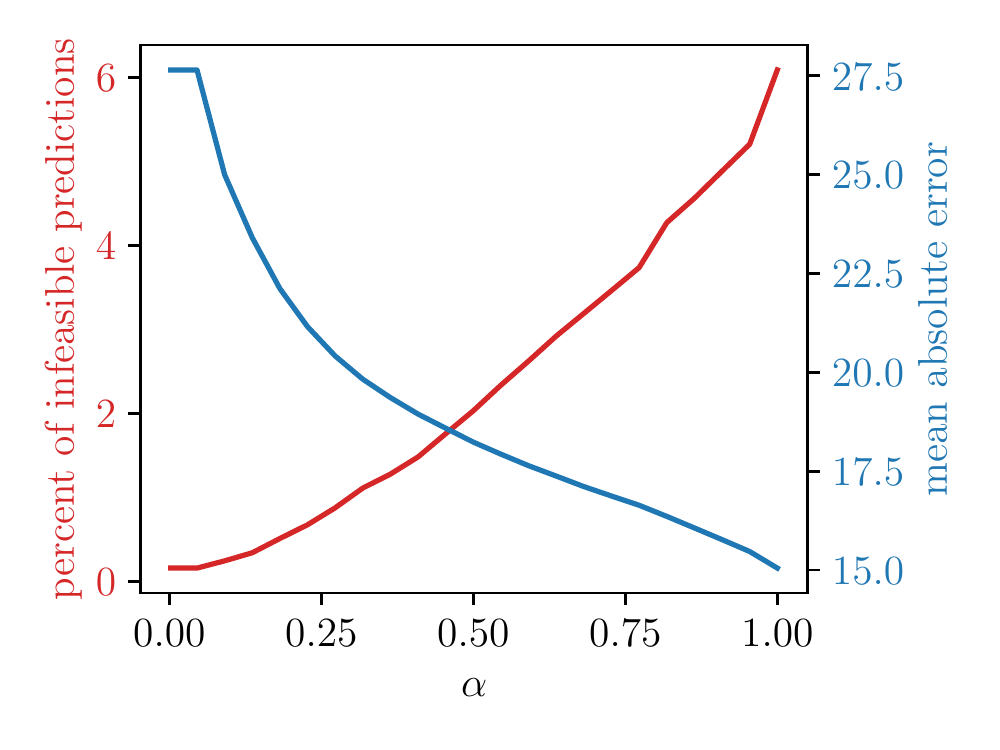}
    \caption{Percentage of infeasible prediction vs mean absolute prediction error of the flexibility envelopes for different values of $\alpha$ over 10 days.}
    \label{fig:flex_err}
\end{figure}

We get the following results regarding the percentage of infeasible predictions and mean absolute prediction error, displayed in Fig.~\ref{fig:flex_err}. The percentage of infeasible predictions, over the course of the 10 days, is at about 0.16\% for $\alpha=\frac{1}{440}$, and it rises up to about 6.09\% for $\alpha=1.0$. On the other hand, the mean absolute prediction error decreases from about 28 timesteps for $\alpha=\frac{1}{440}$, to about 15 for $\alpha=1.0$, indicating a tradeoff between conservativeness and prediction error.

This tradeoff, together with the incentives for providing flexibility and penalties for not being able to provide the promised flexibility, can inform the selection of an uncertainty parameter $\alpha$ to be used in a flexibility scheme.

\section{Conclusion}
\label{sec:con}
We presented a novel battery model formulation for estimating the available flexibility of a single building. The key features of this battery model are its physical motivation, its data-driven nature, and its uncertainty quantification. We demonstrated how to handle the uncertainty by incorporating risk preferences of the user through risk measures and how to compute the uncertainty set efficiently. The approach was tested for the flexibility envelope prediction of a simulation model from the Python library Energym, highlighting the influence of the user defined risk parameter.

In the battery model formulation, a general model for predicting the nominal state was used. An evaluation of different model structures, in dependence on the used controller, is of interest for future work. This goes hand in hand with the incorporation of uncertainty in the external influences (see e.g. \cite{MADDALENA2021109896}), which will be the topic of future work. Finally, this work focused on single buildings. For a subsequent study, we aim to aggregate multiple assets and coordinate the dispatch of incoming aggregated flexibility requests.

\bibliographystyle{ieeetr}
\bibliography{root}

\end{document}